\newtheorem{prop}{Proposition}[section]
\newtheorem{remark}{Remark}[section]
\begin{document}

\title{Contingency Identification of Cascading Failures in Power Transmission Networks}

\author{Chao Zhai, Hehong Zhang, Gaoxi Xiao and Tso-Chien Pan \thanks{Chao Zhai, Hehong Zhang, Gaoxi Xiao and Tso-Chien Pan are with Institute of Catastrophe Risk Management, Nanyang Technological University, 50 Nanyang Avenue, Singapore 639798. They are also with Future Resilient Systems, Singapore-ETH Centre, 1 Create Way, CREATE Tower, Singapore 138602. Chao Zhai, Hehong Zhang and Gaoxi Xiao are also with School of Electrical and Electronic Engineering, Nanyang Technological University. Corresponding author: Gaoxi Xiao. Email: egxxiao@ntu.edu.sg}}

\maketitle

\begin{abstract}
Due to the evolving nature of power systems and the complicated coupling relationship of power devices, it has been a great challenge to identify the contingencies that could trigger cascading blackouts of power systems. This paper provides an effective approach to identifying the initial contingency in power transmission networks, which are equipped with flexible alternating current transmission system (FACTS) devices, high-voltage direct current (HVDC) links and protective relays. Essentially, the problem of contingency identification is formulated in the framework of nonlinear programming, which can be solved by the Jacobian-Free Newton-Krylov (JFNK) method to circumvent Jacobian matrix and reduce the computational cost. Notably, the proposed identification approach is also applied to complicated cascading failure models of power systems. Finally, numerical simulations are carried out to validate the proposed identification approach on IEEE $118$ Bus Systems. The proposed approach succeeds in reconciling the rigorous optimization formulation with the practical modeling of cascading blackouts.
\end{abstract}

Keywords: Cascading failures, contingency identification, power transmission networks, nonlinear programming

\section{Introduction}
The past decades have witnessed several large blackouts in the world such as India Blackout (2012), US-Canada Blackout (2003), Italy Blackout (2003) and Southern Brazil Blackout (1999) to name just a few, which have left millions of residents without power supply and caused huge financial losses \cite{mcl09}. In such catastrophe events, the initial contingencies ($e.g.$ extreme weather, terrorist attack and operator error) play a crucial role in triggering the cascading outage of power systems. It is reported that the mal-operation of a protection relay is the key ``trigger" of the final line outage sequence in most blackouts \cite{beck05}. For instance, conventional relays may lead to unselective tripping under high load conditions, which could initiate the chain reaction of branch outages under certain conditions (e.g., a wrong relay operation of Sammis-Star line in the 2003 US-Canada Blackout \cite{beck05}). The reliability and resilience of power grids are closely related to the proactive elimination of disruptive initial contingencies. Thus, it is vital to identify the initial contingency that causes the most severe blackouts and work out remedial schemes against cascading blackouts in advance.

In practice, electrical power devices such as FACTS devices, HVDC links and protective relays serve as the major protective barrier against cascading blackouts. To be specific, FACTS devices significantly contribute to the stability improvement of power systems, while HVDC links behave like a ``firewall" to prevent the propagation of cascading outages. Actually, the FACTS devices have been widely installed in power transmission networks to improve the capability of power transmission, controllability of power flow, damping of power oscillation and post-contingency stability. As a series FACTS device, the thyristor-controlled series capacitor (TCSC) allows fast and continuous adjustments of branch impedance in order to control the power flow and improve the transient stability \cite{jov05}. In addition, the HVDC links assist in preventing cascades propagation and restoring the power flow after faults. For example, Qu\'{e}bec power system in Canada survived the cascades in the 2003 US-Canada Blackout due to its DC interconnection to the US power systems \cite{beck05}. As the most common protection device, protective relays of power system react passively to the system oscillation and promptly remove the overloading elements without affecting the normal operation of the rest of the system. Meanwhile it allows for time delay of abnormal oscillations to neglect the trivial disturbances and avoid the overreaction to the transient state changes \cite{jia16}. It is necessary to take into account the protection mechanism of the above power devices for the practical cascading dynamics of power systems.

Owing to simplicity, efficiency and scalability in the simulation, the DC power flow model has been widely adopted to investigate cascading failures of power systems \cite{alm15,yan15}. It is demonstrated that the DC power flow model is able to assess the vulnerability of power grids and reveal informative details of cascading failure process, including the size, contributing factors and the duration of cascading failures \cite{yan15}. Additionally, the model predictive control can be applied to mitigate the cascading effect of severe line-overload disturbances in power systems \cite{alm15}. Actually, the DC power flow model is usually regarded as a good substitute for the AC based model in high voltage power transmission networks \cite{zhai17a,stot09}. As a result, the DC power flow equation is employed in this work to compute the transmission power on branches of power transmission networks.

So far, cascading blackouts of power systems have been investigated through two distinct routes. Specifically, some researchers aim at the strict mathematical formulation for the exploration of vulnerable elements in power systems regardless of the transient response and protection mechanisms \cite{alm15,tae16}, while others focus on the practical physical process and accurate modeling of cascading blackouts \cite{jia16,yan15}. While the former may fail to reflect the real physical characteristic of cascading failures, the latter is in lack of a rigorous theoretical framework. This work attempts to fill the gap between the practical modeling of cascading blackouts and the strict mathematical formulation by properly decoupling the optimization problem and cascading dynamics of power grids. The main contributions of this paper are listed as follows
\begin{enumerate}
  \item Propose the cascading dynamics of power transmission networks equipped with FACTS devices, HVDC links and protective relays.
  \item Formulate the problem of contingency identification with nonlinear programming and solve it via the efficient numerical method.
  \item Validate the proposed approach on the large-scale power transmission networks using different protection schemes.
\end{enumerate}

The outline of this paper is organized as follows. Section \ref{sec:prob} presents the cascading dynamics of power transmission networks. Section \ref{sec:opt} provides the optimization formulation and theoretical results on the contingency identification, followed by numerical methods in Section \ref{sec:num}. Next, the identification approach is validated in Section \ref{sec:sim}. Finally, we conclude the paper and discuss future work in Section \ref{sec:con}.

\section{Cascading Dynamics}\label{sec:prob}
This section aims to characterize the cascading evolution of power transmission networks subject to the initial contingency and system stresses. Figure \ref{cascade} presents the cascading process of power systems after the initial contingency is added on the system. First of all, the FACTS devices take effect to adjust the branch admittance and balance the power flow for relieving the stress of power networks. If the stress is not eliminated, protective relays will be activated to serve the overloading branches on the condition that the timer of circuit breakers runs out of the preset time. Under certain circumstances, the outage of overloading branches may result in severer stress of power networks and ends up with having cascading blackouts of power systems. To describe the cascading process, we introduce the concept of cascading step. Essentially, a cascading step is defined as one topological change ($e.g.$, one branch outage) of power networks due to contingencies, human interferences or the branch overloads. The models of the DC power flow, FACTS devices, HVDC links and protective relays are presented in sequence.

\begin{figure}
\scalebox{0.07}[0.07]{\includegraphics{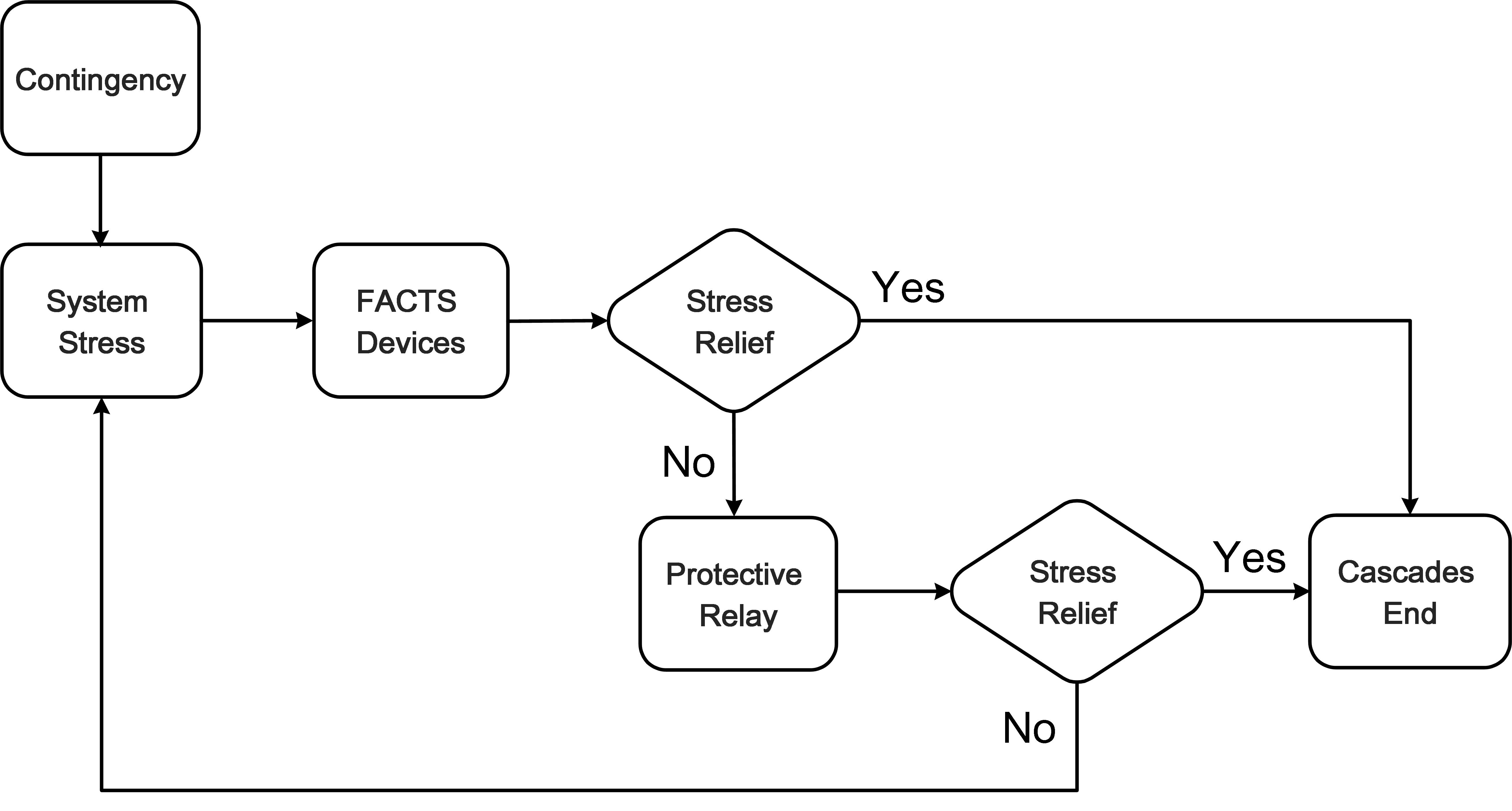}}\centering
\caption{\label{cascade} Cascading failure process of power transmission networks.}
\end{figure}

\subsection{DC Power Flow Model}
For high-voltage power transmission networks, the DC power flow equation is well qualified to describe the quantitative relationship of injected bus power, branch susceptance and voltage angle as follows
\begin{equation}\label{dc_pfe}
    P_b=A^Tdiag(B)A\theta
\end{equation}
where $A$ denotes the branch-bus incidence matrix \cite{stag68} and $\theta$ refers to the vector of voltage angles. $P_b$ represents the vector of injected power on each bus. Additionally, $B=(B_{1},B_{2},...,B_{n})$ is the susceptance vector for branches, and each element $B_{i}$ is given by
$$
B_{i}=-\frac{1}{X_{C,i}+X_i}, \quad i\in I_n=\{1,2,...,n\}
$$
where $X_{C,i}$ denotes the reactance of TCSC equipped on Branch $i$, and $X_i$ represents the original reactance of Branch $i$. The DC power flow equation (\ref{dc_pfe}) can be solved as
$$
\theta=\left(A^Tdiag(B)A\right)^{-1^*}P_b
$$
where the operator $-1^{*}$ denotes the operation of matrix inverse, which is defined in \cite{cz17}. Then the vector of transmission power on each branch can be computed by
\begin{equation}\label{p_e}
    P_e=diag(B)A\left(A^Tdiag(B)A\right)^{-1^*}P_b
\end{equation}
Notably, the generator bus connected to the largest generating station is selected as the slack bus,
and thus the power variation of slack bus accounts for a small percentage of its generating capacity.

\subsection{FACTS Devices}
\begin{figure}
\scalebox{0.07}[0.07]{\includegraphics{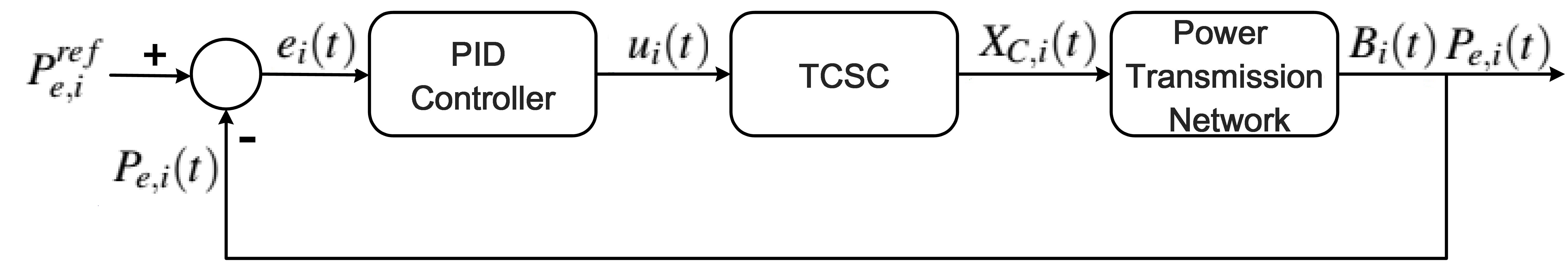}}\centering
\caption{\label{tcsc_con} Control diagram of TCSC on branches.}
\end{figure}
FACTS devices can greatly enhance the stability and transmission capability of power systems. As an effective FACTS device, TCSC has been widely installed to control the branch impedance and relieve system stresses. The dynamics of TCSC is described by a first order dynamical model \cite{pas95}
\begin{equation}\label{tcsc}
    T_{C,i}\frac{d{X}_{C,i}}{dt}=-X_{C,i}+X_{ref,i}+u_i, \quad X_{\min,i}\leq X_{C,i}\leq X_{\max,i}\quad i\in I_n
\end{equation}
where $X_{ref,i}$ refers to its reference reactance of Branch $i$ for the steady power flow. $X_{\min,i}$
and $X_{\max,i}$ are the lower and upper bounds of the branch reactance $X_{C,i}$ respectively and $u_i$ represents the supplementary control input, which is designed to stabilize the disturbed power system \cite{son00}. For simplicity, PID controller is adopted to regulate the power flow on each branch
\begin{equation}\label{pid}
    u_i(t)=K_P\cdot e_i(t)+K_I\cdot\int_{0}^{t}e_i(\tau)d\tau+K_D\cdot\frac{de_i(t)}{dt}
\end{equation}
where $K_P$, $K_I$ and $K_D$ are tunable coefficients, and the error $e_i(t)$ is given by
$$
e_i(t)=\left\{
         \begin{array}{ll}
           P^{ref}_{e,i}-|P_{e,i}(t)|, & \hbox{$|P_{e,i}(t)|\geq P^{ref}_{e,i}$;} \\
           0, & \hbox{otherwise.}
         \end{array}
       \right.
$$
Here, $P^{ref}_{e,i}$ and $P_{e,i}(t)$ denote the reference transmission power and the actual transmission power of Branch $i$, respectively. Note that TCSC fails to function when the transmission line is severed. Figure \ref{tcsc_con} presents the diagram about the operation of TCSC via PID controller to reach the reference transmission power. First of all, we compute the error $e_i(t)$ between the actual power $P_{e,i}(t)$ and the reference power $P^{ref}_{e,i}$. Next, the PID controller produces the control input $u_i(t)$ based on $e_i(t)$, which regulates the reactance of TCSC on Branch $i$. Finally, the actual power $P_{e,i}(t)$ will converge to the reference power $P^{ref}_{e,i}$ as time goes.

\subsection{HVDC Links}
\begin{figure}
\scalebox{0.075}[0.075]{\includegraphics{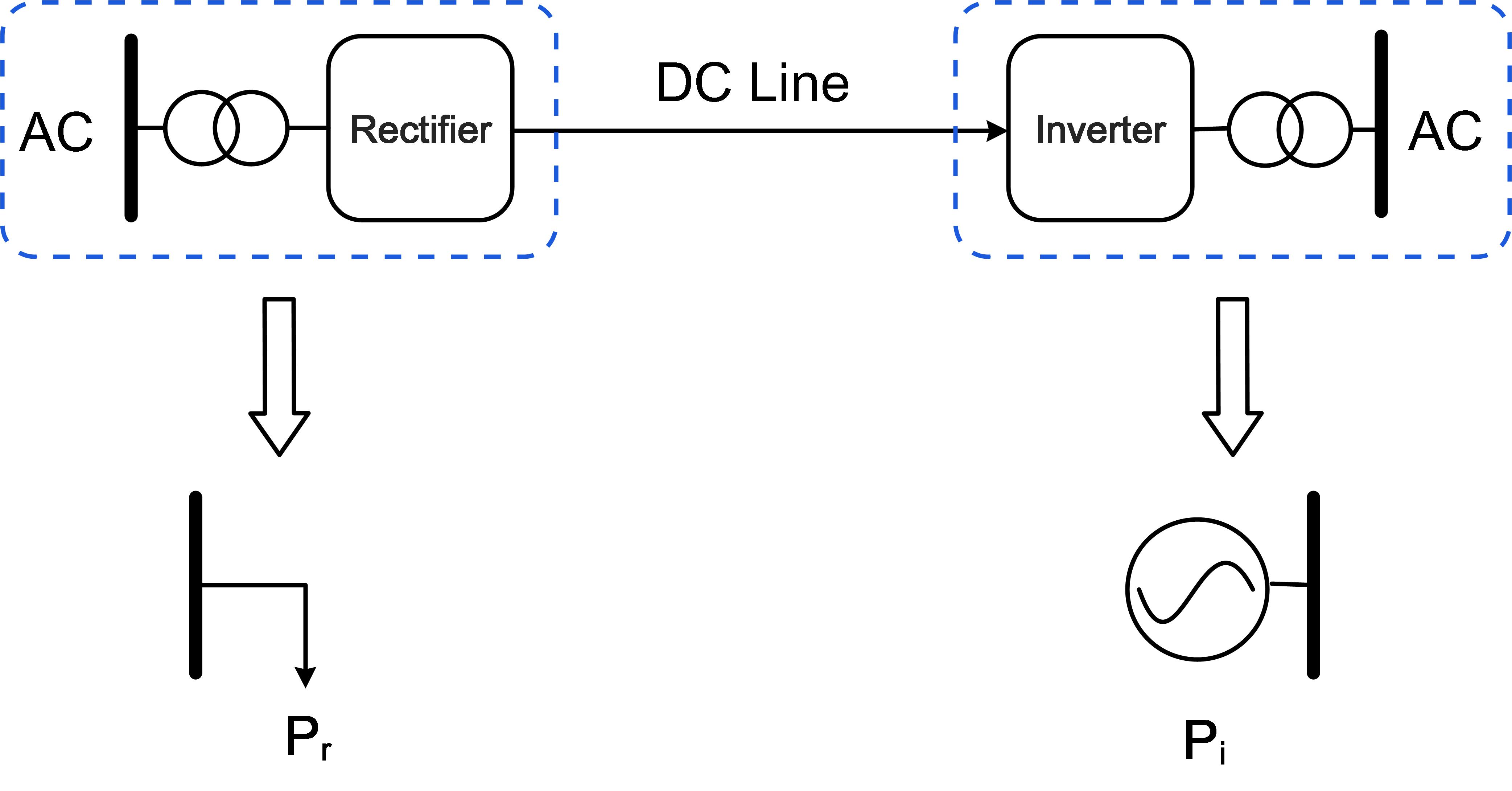}}\centering
\caption{\label{hvdc}Schematic diagram of monopolar HVDC link and its equivalent circuit.}
\end{figure}
In practice, the HVDC link works as a protective barrier to prevent the propagation of cascading outages,
and it is normally composed of a transformer, a rectifier, a DC line and an inverter (see Fig.~\ref{hvdc}). Actually, the rectifier terminal can be regarded as a bus with real power consumption $P_{r}$, and the inverter terminal can be treated as a bus with real power generation $P_{i}$. The direct current from the rectifier to the inverter is computed as follows \cite{kun94}
$$
I_d=\frac{3\sqrt{3}(\cos\alpha-\cos\gamma)}{\pi(R_{cr}+R_L-R_{ci})},
$$
where $\alpha\in[\pi/30,\pi/2]$ denotes the ignition delay angle of the rectifier, and $\gamma\in[\pi/12,\pi/9]$ represents the extinction advance angle of the inverter. $R_{cr}$ and $R_{ci}$ refer to the equivalent communicating resistances for the rectifier and inverter, respectively. Additionally, $R_L$ denotes the resistance of the DC transmission line. Thus the power consumption at the rectifier terminal is
\begin{equation}\label{pr}
    P_r=\frac{3\sqrt{3}}{\pi}I_d\cos\alpha-R_{cr}I^2_d,
\end{equation}
and at the inverter terminal is
\begin{equation}\label{pi}
    P_i=\frac{3\sqrt{3}}{\pi}I_d\cos\gamma-R_{ci}I^2_d=P_r-R_LI^2_d.
\end{equation}
Notably, $P_r$ and $P_i$ keep unchanged when $\alpha$ and $\gamma$ are fixed.

\subsection{Protective Relay}
The protective relays are indispensable components in power systems protection and control. When the transmission power exceeds the given threshold of the branch, the timer of circuit breaker starts to count down from the preset time \cite{jia16}. Once the timer runs out of the preset time, the transmission line is severed by circuit breakers and its branch admittance becomes zero. Specifically, the vector of branch susceptance at the $k$-th cascading step is given by
\begin{equation}\label{relay}
    B^k=G(P^{k-1}_{e},\sigma)\circ F(B^{k-1})
\end{equation}
where the operator $\circ$ denotes the Hadamard product, and $\sigma=(\sigma_1,\sigma_2,...,\sigma_n)$ represents the threshold vector of transmission power on each branch. $F(B^{k-1})$ provides the vector of branch susceptance at the $(k-1)$-th cascading step, and it updates constantly due to the dynamics of FACTS devices. Additionally, the vector function $G(P^{k-1}_{e},\sigma)$ is used to characterize the branch outage as follows
$$
G(P^{k-1}_{e},\sigma)=\left(
                        \begin{array}{c}
                          g(P^{k-1}_{e,1},\sigma_1) \\
                          g(P^{k-1}_{e,2},\sigma_2) \\
                          . \\
                          g(P^{k-1}_{e,n},\sigma_n) \\
                        \end{array}
                      \right)\in R^n
$$
And each element of $G(P^{k-1}_{e},\sigma)$ is a step function as follows
$$
g(P^{k-1}_{e,i},\sigma_i)=\left\{
                            \begin{array}{ll}
                              0, & \hbox{$|P^{k-1}_{e,i}|>\sigma_i$ and $t_c>T$;} \\
                              1, & \hbox{otherwise.}
                            \end{array}
                          \right.
$$
where $T$ is the preset time of the timer in protective relays, and $t_c$ denotes the counting time of the timer. Intuitively, the branch outage occurs when its transmission power is larger than the threshold and meanwhile its timer runs out.

The evolution time of cascading failure is introduced to allow for the time factor of cascading blackouts.
Essentially, the time interval between two consecutive cascading steps basically depends on the preset time of the timer in protective relays \cite{jia16}. Thus, the evolution time of cascading failure is roughly estimated by $t=kT$ at the $k$-th cascading step.

\section{Optimization Formulation}\label{sec:opt}

Since cascading blackouts result in the severe damage of power transmission, we focus on the power transmission at the end of cascading outages and thus design the cost function as follows
\begin{equation}\label{cost}
J(\delta, B^m)=\frac{1}{2}\|P^m_{e}(\delta)\|^2
\end{equation}
where $P^m_{e}(\delta)$ denotes the vector of transmission power on each branch at the $m$-th cascading step, and $\delta\in[\underline{\delta},\bar{\delta}]$ characterizes the admittance change of the selected branch caused by the initial contingency. Specifically, the vector of transmission power $P^m_{e}(\delta)$ after the contingency can be computed by
\begin{equation*}
    P^m_{e}(\delta)=diag(B^m)A(A^T diag(B^m)A)^{-1^*}P_b
\end{equation*}
with
$$
B^k=G(P^{k-1}_{e},\sigma)\circ F(B^{k-1}), \quad k\in I_m=\{1,2,...,m\}.
$$
As mentioned before, $F(B^{k-1})$ characterizes the dynamical adjustment of FACTS devices at the $k$-th cascading step with $B^{1}=B^{0}+\delta$. Notably, $P_b$ refers to the vector of injected power on buses after the rectifier and inverter terminals of HVDC links are treated as the loads and generators, respectively. Therefore, the problem of identifying initial contingencies in power transmission networks is formulated as
\begin{equation}\label{formulation}
\begin{split}
&~~~~~~\min_{\delta} J(\delta,B^m) \\
&s.~t.~\underline{\delta}\leq\delta\leq\bar{\delta} \\
&~~~~~~B^k=G(P^{k-1}_{e},\sigma)\circ F(B^{k-1}),~k\in I_m \\
&~~~~~~P^k_{e}(\delta)=diag(B^k)A(A^T diag(B^k)A)^{-1^*}P_b \\
\end{split}
\end{equation}
where the objective function $J(\delta,Y_p^m)$ is defined in equation (\ref{cost}). Then it follows from the KKT conditions that we obtain necessary conditions for optimal solutions to Optimization Problem (\ref{formulation}) as follows \cite{man94}.
\begin{prop}
The optimal solution $\delta^{*}$ to Optimization Problem (\ref{formulation}) with the multipliers $\mu_1$ and $\mu_2$ satisfies the KKT conditions
\begin{equation}\label{kkt}
\begin{split}
&P^m_{e}(\delta^*)^T\left(\frac{\partial P^m_{e}}{\partial\delta}|_{\delta^*}\right)+\mu_1-\mu_2=0 \\
&\delta^*-\bar{\delta}+x^2_1=0 \\
&\delta^*-\underline{\delta}-x^2_2=0 \\
&\mu_1(\delta^*-\bar{\delta})=0 \\
&\mu_2(\delta^*-\underline{\delta})=0 \\
&\mu_1-y_1^2=0 \\
&\mu_2-y_2^2=0 \\
\end{split}
\end{equation}
where $x_i$ and $y_i$, $i\in I_2$ are the unknown variables.
\end{prop}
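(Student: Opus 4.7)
The plan is to derive the stated KKT system by forming an augmented Lagrangian in which the two inequality constraints $\delta\leq\bar{\delta}$ and $\underline{\delta}\leq\delta$ are converted into equality constraints by introducing the non-negative slack variables $x_1^2$ and $x_2^2$, and similarly the multiplier non-negativity $\mu_1\geq 0$, $\mu_2\geq 0$ is encoded by $\mu_1=y_1^2$, $\mu_2=y_2^2$. With this reformulation the problem becomes an equality-constrained program to which first-order stationarity can be applied directly, reproducing precisely the seven relations listed in (\ref{kkt}).

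First I would write the Lagrangian
\begin{equation*}
\mathcal{L}(\delta,\mu_1,\mu_2,x_1,x_2,y_1,y_2)=J(\delta,B^m)+\mu_1(\delta-\bar{\delta}+x_1^2)+\mu_2(\underline{\delta}-\delta+x_2^2)+\lambda_1(\mu_1-y_1^2)+\lambda_2(\mu_2-y_2^2),
\end{equation*}
treating the cascading dynamics $B^k=G(P^{k-1}_{e},\sigma)\circ F(B^{k-1})$ and the DC solve $P^k_e(\delta)=diag(B^k)A(A^Tdiag(B^k)A)^{-1^*}P_b$ as defining a smooth implicit map $\delta\mapsto P^m_e(\delta)$ on the set of $\delta$ that do not lie on a switching surface of $G$. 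Then I would compute $\partial\mathcal{L}/\partial\delta$ by chain rule on $J(\delta,B^m)=\tfrac{1}{2}\|P^m_e(\delta)\|^2$, yielding the stationarity equation
\begin{equation*}
P^m_{e}(\delta^*)^{T}\!\left(\frac{\partial P^m_{e}}{\partial\delta}\bigg|_{\delta^*}\right)+\mu_1-\mu_2=0,
\end{equation*}
which is the first line of (\ref{kkt}). The remaining KKT relations follow from $\partial\mathcal{L}/\partial x_i=0$ combined with primal feasibility (giving the second and third lines), $\partial\mathcal{L}/\partial\mu_i=0$ (giving the slack-identities that, upon substitution, become the complementary slackness equations $\mu_1(\delta^*-\bar{\delta})=0$ and $\mu_2(\delta^*-\underline{\delta})=0$), and $\partial\mathcal{L}/\partial y_i=0$ (giving $\mu_i=y_i^2$, i.e., dual feasibility recast as the sixth and seventh lines).

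The main obstacle I anticipate is not the algebraic manipulation, which is mechanical once the Lagrangian is written, but rather the justification that a first-order condition is meaningful given that the cascading map involves the step function $G(\cdot,\sigma)$ and the pseudo-inverse operator ${}^{-1^*}$. To handle this I would restrict attention to optimizers $\delta^*$ that lie strictly away from any threshold $|P^{k-1}_{e,i}|=\sigma_i$ and away from singularities of $A^Tdiag(B^k)A$; on such an open set $P^m_e$ is smooth in $\delta$, the required partial derivative $\partial P^m_e/\partial\delta$ is well defined, and the usual KKT necessary conditions \cite{man94} for a differentiable nonlinear program with box constraints apply verbatim, producing exactly system (\ref{kkt}).
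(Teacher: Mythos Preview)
Your plan reaches the right destination, but the route you take is more elaborate than the paper's and contains a few mis-attributions. The paper does not build a Lagrangian at all: it simply recalls that the KKT conditions consist of stationarity, primal feasibility, dual feasibility, and complementary slackness, writes each of these down for the box-constrained problem, and then mechanically replaces each inequality by an equality with a squared slack ($\delta^*-\bar\delta+x_1^2=0$, $\mu_1-y_1^2=0$, etc.). That is the entire argument.

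Your augmented-Lagrangian derivation, by contrast, introduces extra multipliers $\lambda_1,\lambda_2$ for the artificial constraints $\mu_i=y_i^2$, and then the bookkeeping of which derivative yields which line is off. From your $\mathcal{L}$ one gets $\partial\mathcal{L}/\partial x_1=2\mu_1x_1=0$, and it is \emph{this} (together with $x_1^2=\bar\delta-\delta^*$) that produces complementary slackness $\mu_1(\delta^*-\bar\delta)=0$, not $\partial\mathcal{L}/\partial\mu_1$. Likewise $\partial\mathcal{L}/\partial y_i=-2\lambda_i y_i=0$ does not yield $\mu_i=y_i^2$; the latter is a constraint you imposed, not a stationarity consequence. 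The equations $\delta^*-\bar\delta+x_1^2=0$ and $\mu_i-y_i^2=0$ are simply the feasibility constraints of your reformulated problem, not outputs of differentiation. None of this invalidates the result, but it means the cleanest derivation is the paper's direct appeal to the four KKT ingredients rather than a stationarity computation on an enlarged Lagrangian.

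Your final paragraph on regularity---restricting to $\delta^*$ away from the switching surfaces of $G$ and from singularities of $A^T\mathrm{diag}(B^k)A$ so that $\delta\mapsto P_e^m(\delta)$ is locally smooth---is a genuine improvement over the paper, which tacitly assumes differentiability without comment.
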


\begin{proof}
The KKT conditions for Optimization Problem (\ref{formulation}) are composed of four components: stationary, primal feasibility, dual feasibility and complementary slackness. Specifically,
stationary condition allows us to obtain
$$
\frac{\partial J(\delta,Y^m_p)}{\partial\delta}|_{\delta^*}+\mu_1-\mu_2=0,
$$
which is equivalent to
$$
P^m_{e}(\delta^*)^T\left(\frac{\partial P^m_{e}}{\partial\delta}|_{\delta^*}\right)+\mu_1-\mu_2=0
$$
using equation (\ref{cost}). Additionally, the primal feasibility leads to $\underline{\delta}\leq\delta^*\leq\bar{\delta}$, which can be converted into equality constraints
$$
\delta^*-\bar{\delta}+x_1^2=0, \quad \delta^*-\underline{\delta}-x_2^2=0
$$
with the unknown variables $x_1,x_2 \in R$. Moreover, the dual feasibility corresponds to $\mu_1,\mu_2\geq0$, which can be replaced by
$$
\mu_1-y_1^2=0,\quad \mu_2-y_2^2=0
$$
with the unknown variables $y_1,y_2 \in R$. Finally, the complementary slackness gives
$$
\mu_1(\delta^*-\bar{\delta})=0, \quad \mu_2(\delta^*-\underline{\delta})=0
$$
This completes the proof.
\end{proof}

\begin{remark}
To reduce the computation burden, the partial derivative of $P^m_{e}$ with respective to $\delta$ can be approximated by
\begin{equation}\label{app}
\frac{\partial P^m_{e}}{\partial\delta}|_{\delta^*}\approx\frac{P^m_{e}(\delta^*+\epsilon)-P^m_{e}(\delta^*)}{\epsilon}
\end{equation}
with the sufficiently small $\epsilon$.
\end{remark}

\section{Numerical Method} \label{sec:num}
To avoid the computation of partial derivatives and reduce computation costs, the Jacobian Free Newton Krylov (JFNK) Method is employed to solve the system of nonlinear algebraic equations without forming the Jacobian matrix. Essentially, the JFNK methods are synergistic combinations of Newton methods for solving nonlinear equations and Krylov subspace methods for solving linear equations \cite{kno04}.
To facilitate the analysis, we rewrite Equation (\ref{kkt}) in matrix form
\begin{equation}\label{mequation}
    \mathcal{F}(\mathbf{z})=\mathbf{0}
\end{equation}
where $\mathbf{z}=(\delta^*,\mu_1,\mu_2,x_1,x_2,y_1,y_2)^T\in R^7$ denotes the unknown vector, and $\mathbf{0}\in R^7$ refers to a zero vector. To obtain the iterative formula for solving (\ref{mequation}), we compute Taylor series of $\mathcal{F}(\mathbf{z})$ at $\mathbf{z}^{s+1}$ as follows
\begin{equation}\label{taylor}
    \mathcal{F}(\mathbf{z}^{s+1})=\mathcal{F}(\mathbf{z}^{s})+\mathfrak{J}(\mathbf{z}^s)(\mathbf{z}^{s+1}-\mathbf{z}^{s})+O(\delta\mathbf{z}^{s})
\end{equation}
with $\delta\mathbf{z}^{s}=\mathbf{z}^{s+1}-\mathbf{z}^{s}$. By neglecting the high-order term $O(\delta\mathbf{z}^{s})$ and setting $\mathcal{F}(\mathbf{z}^{s+1})=\mathbf{0}$, we obtain
\begin{equation}\label{lequ}
    \mathfrak{J}(\mathbf{z}^s)\cdot\delta\mathbf{z}^{s}=-\mathcal{F}(\mathbf{z}^{s}),\quad  \quad s\in Z^{+}
\end{equation}
where $\mathfrak{J}(\mathbf{z}^s)$ represents the Jacobian matrix and $s$ denotes the iteration index. Thus, solutions to Equation (\ref{mequation}) can be approximated by implementing Newton iterations
$$
\mathbf{z}^{s+1}=\mathbf{z}^{s}+\delta\mathbf{z}^{s}
$$
where $\delta\mathbf{z}^{s}$ is obtained by Krylov methods. First of all, the Krylov subspace is constructed as follows
\begin{equation}\label{kspace}
    K_i=\mathrm{span}\left(\mathbf{r}^s,~\mathfrak{J}(\mathbf{z}^s)\mathbf{r}^s,~\mathfrak{J}(\mathbf{z}^s)^2\mathbf{r}^s,...,~\mathfrak{J}(\mathbf{z}^s)^{i-1}\mathbf{r}^s\right)
\end{equation}
with
$$
\mathbf{r}^s=-\mathcal{F}(\mathbf{z}^{s})-\mathfrak{J}(\mathbf{z}^s)\cdot\delta\mathbf{z}_0^{s},
$$
where $\delta\mathbf{z}_0^{s}$ is the initial guess for the Newton correction and is typically zero \cite{kno04}. Actually, the optimal solution to $\delta\mathbf{z}^{s}$ is the linear combination of elements in Krylov subspace $K_i$.
\begin{equation}\label{gmres}
    \delta \mathbf{z}^s=\delta \mathbf{z}_0^s+\sum_{j=1}^{i-1}\beta_j\cdot \mathfrak{J}(\mathbf{z}^s)^j\mathbf{r}^s
\end{equation}
where $\beta_j$, $j\in\{1,2,...,i-1\}$ is obtained by minimizing $\|\mathfrak{J}(\mathbf{z}^s)\delta z^s+\mathcal{F}(\mathbf{z}^s)\|_2$ with Generalized Minimal RESidual (GMRES) method \cite{saad86}. In particular, matrix-vector products in (\ref{gmres}) can be approximated by
\begin{equation}\label{jv}
\mathfrak{J}(\mathbf{z}^s)\mathbf{r}^s\approx\frac{\mathcal{F}(\mathbf{z}^s+\xi\mathbf{r}^s)-\mathcal{F}(\mathbf{z}^s)}{\xi}
\end{equation}
where $\xi$ is a sufficiently small value \cite{saad90}. In this way, we avoid the computation of Jacobian matrix via matrix-vector products in (\ref{jv}) while solving Equation (\ref{mequation}).

Table~\ref{jfnk} summarizes the JFNK method for solving Equation (\ref{mequation}). First of all, we set the initial step $s=0$, the initial tolerance $\epsilon_0$ and the minimum tolerance $\epsilon_{\min}$ for evaluating the termination condition of loop iterations. Then the residual $\mathbf{r}^s$ is calculated in each iteration, which allows us to construct the Krylov subspace $K_i$. For elements in $K_i$, the matrix-vector products are approximated by Equation (\ref{jv}) without forming the Jacobian. Next, the gradient $\delta\mathbf{z}^{s}$ for Newton iterations is obtained via GMRES method. Finally, we update the tolerance $\epsilon_s$ and step number $s$ after implementing the Newton iteration for $\mathbf{z}^s$. And a new iteration loop is launched if the termination condition $\epsilon_{s}\leq\epsilon_{\min}$ does not hold.
\begin{table}
 \caption{\label{jfnk} JFNK Method.}
 \begin{center}
 \begin{tabular}{lcl} \hline
  1: Set $s=0$, $\epsilon_{0}$ and $\epsilon_{\min}$ satisfying $\epsilon_{0}>\epsilon_{\min}$  \\
  2: \textbf{while}~($\epsilon_{s}>\epsilon_{\min}$) \\
  3: ~~~~~~~Calculate the residual $\mathbf{r}^s=-\mathcal{F}(\mathbf{z}^{s})-\mathfrak{J}(\mathbf{z}^s)\cdot\delta\mathbf{z}_0^{s}$ \\
  4: ~~~~~~~Construct the Krylov subspace $K_i$ in (\ref{kspace}) \\
  5: ~~~~~~~Approximate matrix-vector products in (\ref{gmres}) using (\ref{jv})   \\
  6: ~~~~~~~Compute $\beta_j$ in (\ref{gmres}) with GMRES method \\
  7: ~~~~~~~Compute $\delta\mathbf{z}^{s}$  with (\ref{gmres}) \\
  8: ~~~~~~~Update $\mathbf{z}^{s+1}=\mathbf{z}^{s}+\delta\mathbf{z}^{s}$ \\
  9: ~~~~~~~Update $\epsilon_{s+1}=\|\delta\mathbf{z}^{s}\|/\|\mathbf{z}^s\|$ \\
 10: ~~~~~Update $s=s+1$ \\
 11: \textbf{end while} \\ \hline
 \end{tabular}
 \end{center}
\end{table}

\begin{table}
 \caption{\label{cia} Contingency Identification Algorithm.}
 \begin{center}
 \begin{tabular}{lcl} \hline
  1: Select the disturbed branch \\
  2: Set $l_{\max}$, $l=0$ and $\delta=0$ \\
  3: \textbf{while}~($l<l_{\max}$) \\
  4: ~~~~~~~Compute $\delta^*$ with the JFNK method \\
  5: ~~~~~~~\textbf{if} ($J(\delta^*,B^m)<J(\delta,B^m)$) \\
  6: ~~~~~~~~~~~$\delta=\delta^*$ \\
  7: ~~~~~~~\textbf{end if} \\
  8: ~~~~~~~Update $l=l+1$ \\
  9: \textbf{end while} \\ \hline
 \end{tabular}
 \end{center}
\end{table}

Table \ref{cia} presents the explicit process of implementing Contingency Identification Algorithm (CIA). First of all, we select a branch in power transmission networks to add the disturbance with the initial value $\delta=0$, and the maximum iterative step $l_{\max}$ is specified with the initial iterative step $l=0$. Then we compute the optimal disturbance $\delta^*$ with the JFNK method. The disturbance value $\delta^*$ in (\ref{kkt}) is saved if it leads to the worse cascading blackout ($i.e.$, $J(\delta^*,B^m)<J(\delta,B^m)$). The above algorithm does not terminate until the iterative step $l$ is larger than or equal to $l_{\max}$.

\begin{remark}
Essentially, the proposed approach to identifying initial disturbances is universal, which also applies to power distribution systems using the AC power flow model and more complicated protective mechanisms. The main difference lies in the computation of transmission power at the final cascading step, $i.e.$, $P_e^m$ in Equation (\ref{kkt}).
\end{remark}

\section{Simulation and Validation}\label{sec:sim}

In this section, we implement the proposed CIA in Table \ref{cia} to search for the disruptive disturbances on selected branches of IEEE 118 Bus System \cite{zim11}. The numerical results on disruptive disturbances are validated by disturbing the selected branch with the computed magnitude of disturbance. Per-unit system is adopted with the base value of $100$ MVA in numerical simulations, and the power flow threshold for each branch is specified in Table \ref{thre}. The power flow on each branch is close to the saturation, although it does not exceed their respective threshold. In this way, the power system is vulnerable to initial contingencies, and thus is likely to suffer from cascading blackouts.
\begin{table}
 \caption{\label{thre} Thresholds of the transmission power on each branch.}
 \begin{center}
 \begin{tabular}{lcl} \hline
  \textbf{Power Flow Threshold}~~~~~~~~~~~~~~~~~~~~~~~~~~\textbf{Branch ID} \\ \hline
  ~~~~~~~~~~~7~~~~~~~~~~~~~~~~~~~~~~~~~~~~~  32 \\ \hline
  ~~~~~~~~~~~6~~~~~~~~~~~~~~~~~~~~~~~~~~~~~  18 31 \\ \hline
  ~~~~~~~~~~~5~~~~~~~~~~~~~~~~~~~~~~~~~~~~~  7 8 9 \\ \hline
  ~~~~~~~~~~~4~~~~~~~~~~~~~~~~~~~~~~~~~~~~~  1 12 13 14 21 33 36 37 96 \\ \hline
  ~~~~~~~~~~~3~~~~~~~~~~~~~~~~~~~~~~~~~~~~~  11 15 41 51 141 \\  \hline
  ~~~~~~~~~~~~~~~~~~~~~~~~~~~~~~~~~~~~~~~~~~ 2 3 5 6 10 17 19 20 22 23 25 26 27 28 29 30 \\
  ~~~~~~~~~~~2~~~~~~~~~~~~~~~~~~~~~~~~~~~~~  34 39 42 43 54 62 90 93 94 97 98 99 104 105 \\
  ~~~~~~~~~~~~~~~~~~~~~~~~~~~~~~~~~~~~~~~~~~ 106 107 108 126 127 137 139 163 178 179 183 \\ \hline
  ~~~~~~~~~~~1~~~~~~~~~~~~~~~~~~~~~~~~~~~~~  all other branches \\ \hline
 \end{tabular}
 \end{center}
\end{table}

\begin{figure}\centering
 {\includegraphics[width=0.8\textwidth]{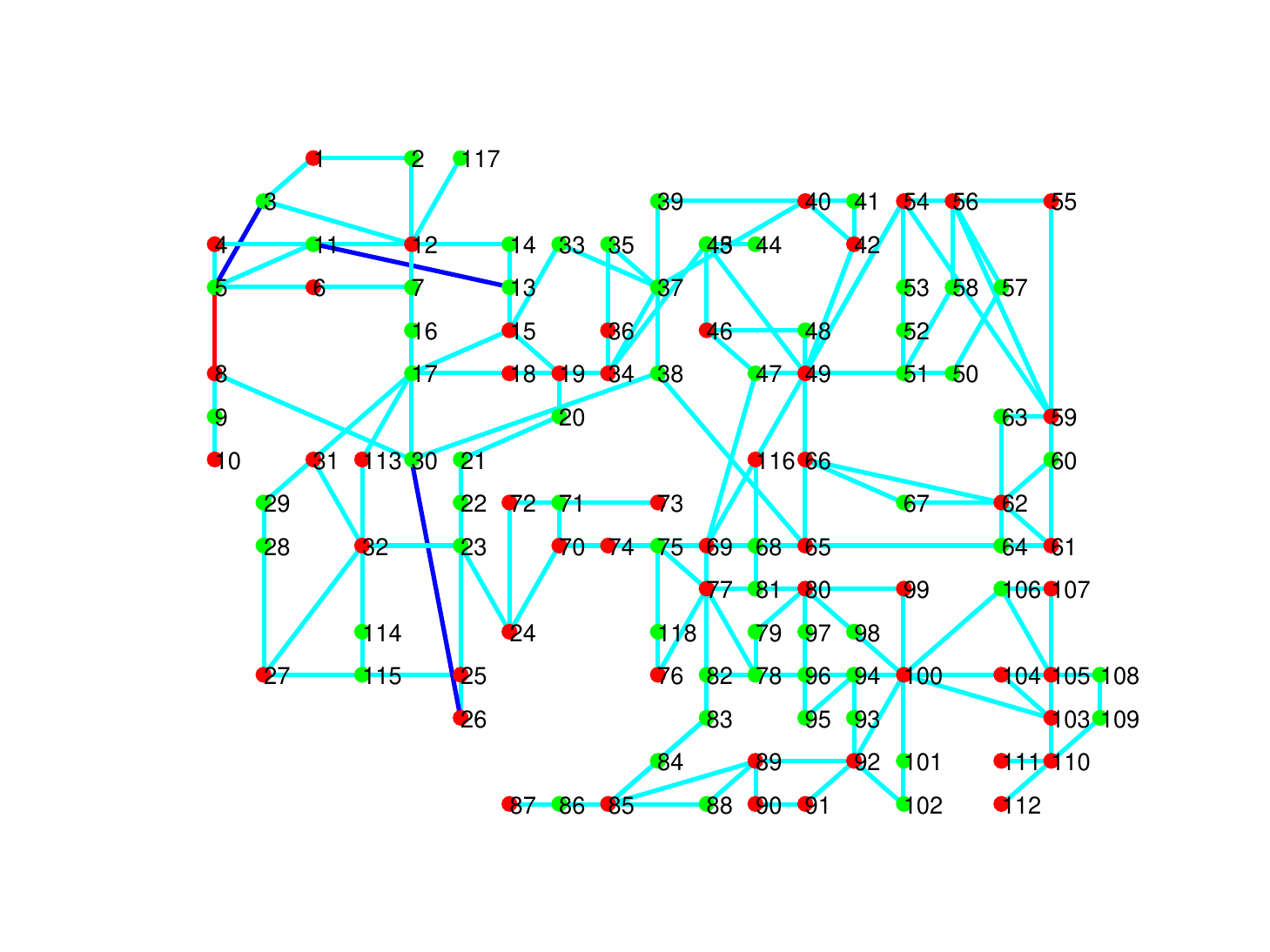}}
 \caption{\label{Step1} Initial state of IEEE 118 Bus System. Red balls denote the generator buses, while blue ones stand for the load buses. Cyan lines represent the branches of power systems. In addition, the red line is selected as the disturbed branch, and three blue lines are the HVDC links, including Branch 4, Branch 16 and Branch 38.}
\end{figure}
Figure \ref{Step1} shows the normal state of IEEE 118 Bus System, which includes 53 generator buses, 64 load buses, 1 reference bus (Bus $69$) and 186 branches. Branch $8$ (red link connecting Bus $5$ to Bus $8$) is randomly selected as the disturbed element of power networks. And the HVDC links are denoted by blue lines including Branch 4 connecting Bus $3$ to Bus $5$, Branch 16 connecting Bus $11$ to Bus $13$ and Branch 38 connecting Bus $26$ to Bus $30$. The maximum iterative step $l_{\max}$ is equal to $10$ in the CIA. Other parameters are given as follows: $\epsilon=10^{-2}$ in Equation (\ref{app}), $\epsilon_{\min}=10^{-8}$ in the JFNK method, $\underline{\delta}=0$, $\bar{\delta}=37.45$ and $m=12$. For simplicity, we specify the same values for the parameters of three HVDC links as follows: $R_{ci}=R_{cr}=R_L=0.1$, $\alpha=\pi/15$ and $\gamma=\pi/4$.
Regarding the FACTS devices, we set $X_{min,i}=0$, $X_{max,i}=10$ and $X_{ref,i}=0$ for the TCSC dynamics, and $K_P=4$, $K_I=3$ and $K_D=2$ for its PID controller. Additionally, the reference transmission power $P^{ref}_{e,i}$ is equal to the threshold of transmission power $\sigma_i$. We consider two preset values of the timer in protective relays, $i.e.$, $T=0.5$s and $T=1$s. Contingency Identification Algorithm is carried out to search for the disturbance that results in the worst-case cascading failures of power systems. For the IEEE 118 Bus System without the FACTS devices, the computed magnitude of disturbance on Branch $8$ is $37.45$, which exactly leads to the outage of Branch 8. For the power system with the FACTS devices and the preset time of the timer $T=0.5$s, the computed disturbance magnitude is $36.77$, while it is $35.98$ for $T=1$s.

\begin{figure}\centering
 {\includegraphics[width=0.8\textwidth]{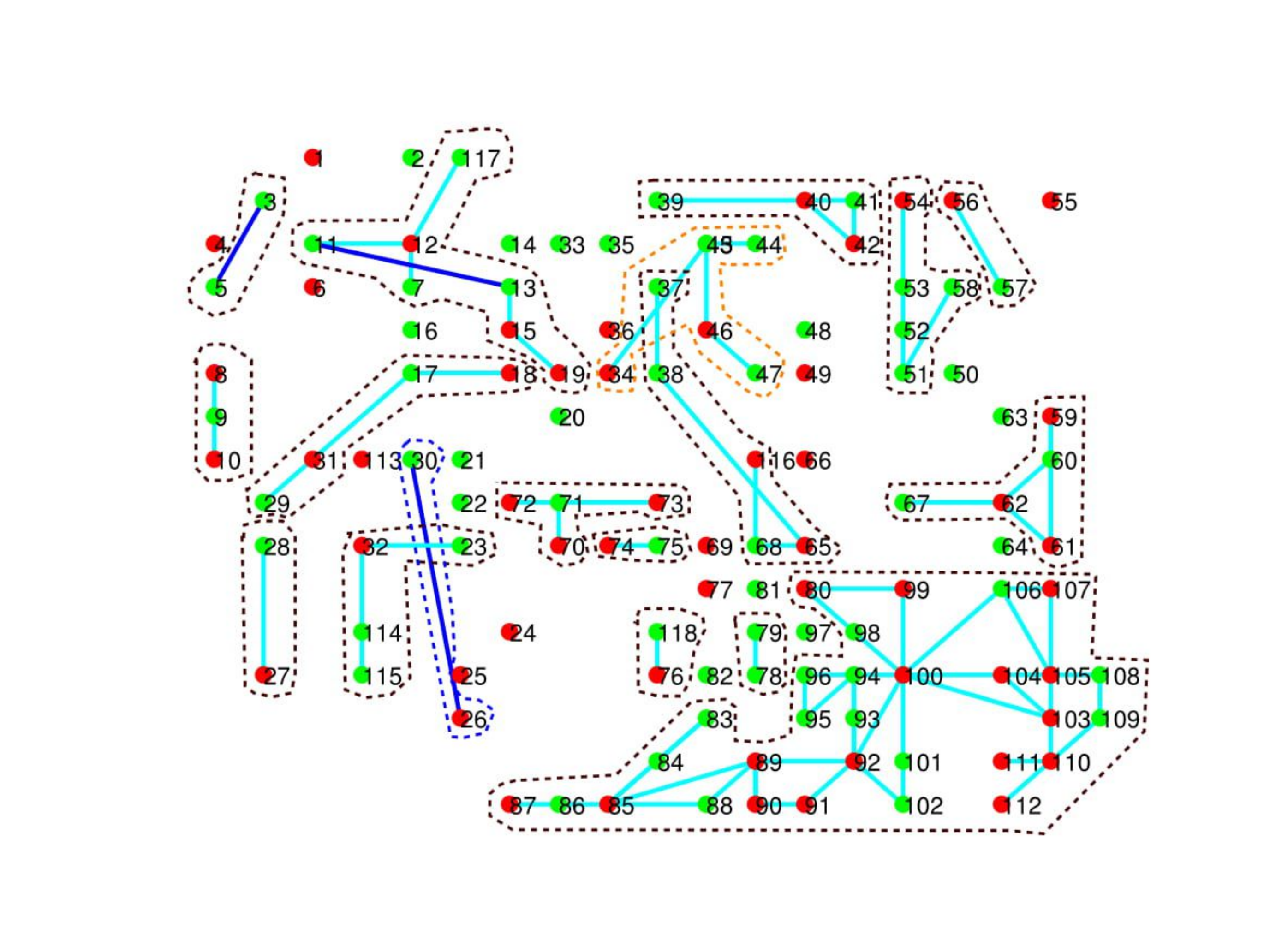}}
 \caption{\label{nofacts} Final configuration of IEEE 118 Bus System without FACTS devices.}
\end{figure}

Next, we validate the proposed approach by adding the computed disturbances on Branch 8 of IEEE 118 Bus Systems. Specifically,
Fig. \ref{nofacts} demonstrates the final state of IEEE 118 Bus System with no FACTS devices and $T=1$s. The cascading process terminates with 95 outage branches and the cost function value of $53.28$ after 16 seconds, and the system collapses with 42 islands in the end. These 42 islands include $24$ isolated buses and $18$ subnetworks encircled by the dashed lines. In contrast, Figure \ref{facts05} presents the final configuration of IEEE 118 Bus Systems with the protection of the FACTS devices and $T=0.5$s. The cascading process ends up with 40 outage branches and the cost function value of $102.56$ after 10 seconds, and the power system is separated into $17$ islands, which include $6$ subnetworks and $11$ isolated buses. Figure \ref{facts1} gives the final state of power systems with the FACTS devices and $T=1$s. It is observed that the power network is eventually split into 3 islands (Bus 14, Bus 16 and a subnetwork composed of all other buses) with only $6$ outage branches and the cost function of $153.69$. Note that the initial disturbances from CIA fail to cause the outage of Branch $8$ in the end for both $T=0.5$s and $T=1$s. The above simulation results demonstrate the advantage of the FACTS devices in preventing the propagation of cascading outages. The larger preset time of timer enables the FACTS devices to sufficiently adjust the branch impedance in response to the overload stress. And thus the less severe damages are caused by the contingency for the larger preset time of timer.

Figure \ref{comp} presents the time evolution of outage branches in IEEE 118 Bus System as a result of disturbing Branch $8$ in three different scenarios. The cyan squares denote the number of outage branches with no FACTS devices and $T=1$s, while the green and blue ones refer to the numbers of outage branches with the FACTS devices for $T=0.5$s and $T=1$s, respectively. The computed disturbances are added to change the admittance of Branch $8$ at $t=0$s. With no FACTS devices, the cascading outage of branches propagates quickly from $t=2$s to $t=10$s and terminates at $t=16$s. When the FACTS devices are adopted and the preset time of timer is $T=0.5$s, the cascading failure starts at $t=2$s and speeds up till $t=8$s and stops at $t=10$s. For $T=1$s, the cascading outage propagates slowly due to the larger preset time of timer and comes to an end with only $6$ outage branches at $t=8$s. Together with protective relays and HVDC links, the FACTS devices succeed in protecting power systems against blackouts by adjusting the branch impedance in real time. More precisely, the number of outage branches decreases by $57.9\%$ with FACTS devices and $T=0.5$s and decreases by $93.7\%$ with FACTS devices and $T=1$s.

\begin{figure}\centering
{\includegraphics[width=0.8\textwidth]{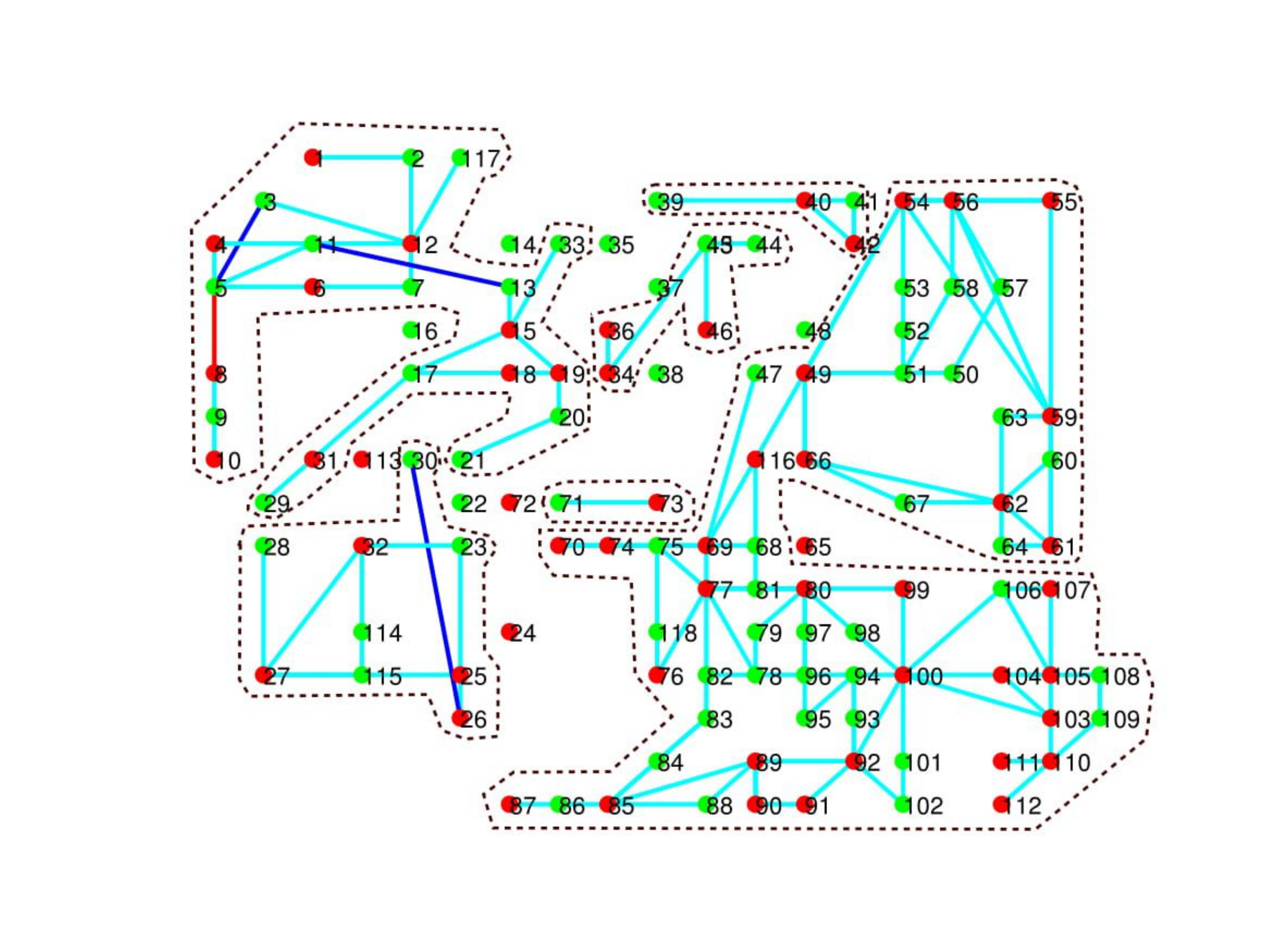}}
\caption{\label{facts05} Final configuration of IEEE 118 Bus System with FACTS devices and $T=0.5$s.}
\end{figure}

\begin{figure}\centering
{\includegraphics[width=0.8\textwidth]{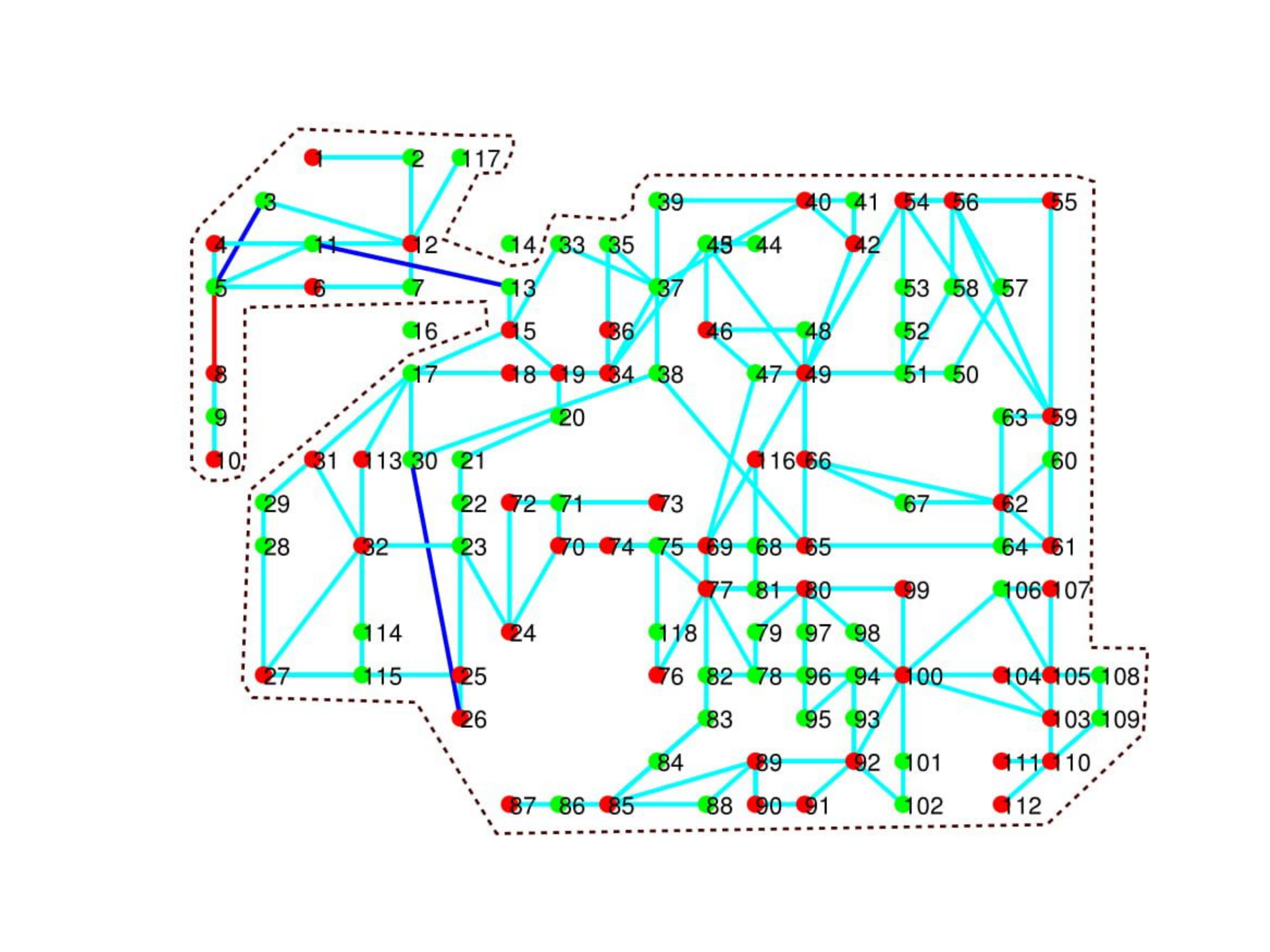}}
\caption{\label{facts1} Final configuration of IEEE 118 Bus System with FACTS devices and $T=1$s.}
\end{figure}

\begin{figure}\centering
{\includegraphics[width=0.6\textwidth]{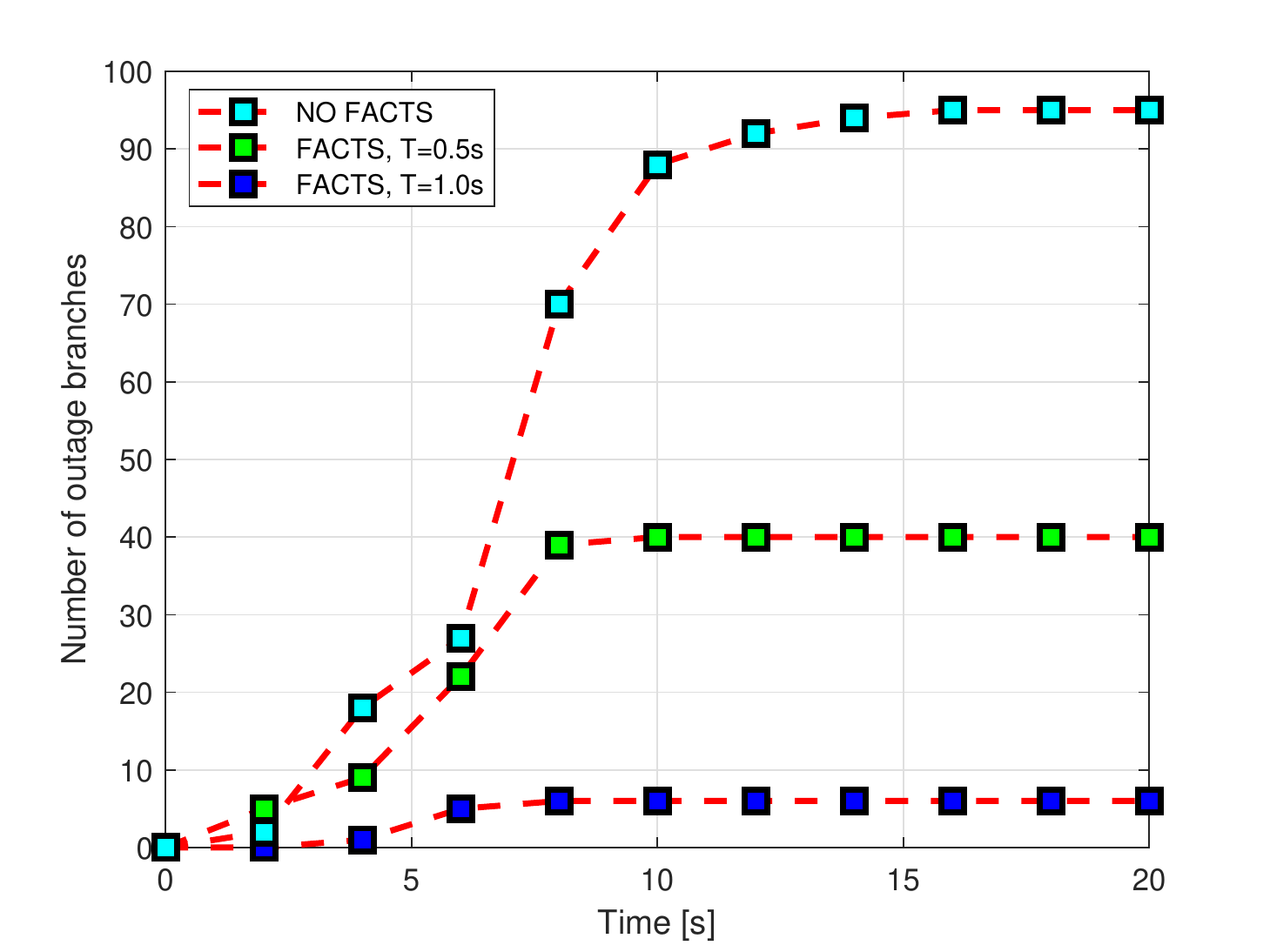}}
\caption{\label{comp} Time evolution of outage branches during cascading blackouts.}
\end{figure}

\section{Conclusions}\label{sec:con}

In this paper, we investigated the problem of identifying the initial contingencies that lead to cascading blackout of power transmission networks equipped with FACTS devices, HVDC links and protective relays. An optimization formulation was proposed to identify the contingencies in the framework of nonlinear programming, and an efficient numerical method was presented
to solve the optimization problem. Numerical simulations were carried out on IEEE 118 Bus Systems to validate the proposed approach. Significantly, the contingency identification algorithm allows us to detect some nontrivial disturbances that lead to the severe cascading failure of power transmission networks, other than severing the branch. It is demonstrated that the coordination of FACTS devices and protective relays greatly enhances the capability of power grids against blackouts. In the next step, we will investigate the contingency identification problem for the AC-based power distribution systems with transient process.

\section*{Acknowledgment}

This work is partially supported by the Future Resilience System Project at the Singapore-ETH Centre (SEC), which is funded by the National Research Foundation of Singapore (NRF) under its Campus for Research Excellence and Technological Enterprise (CREATE) program. It is also supported by Ministry of Education of Singapore under Contract MOE2016-T2-1-119.


\begin{thebibliography}{100}

\bibitem{mcl09} J. McLinn, ``Major Power Outages in the US, and around the World," {\sl Annual Technology Report of IEEE Reliability Society}, 2009.

\bibitem{beck05} B. G\"{u}nther, D. Povh, D. Retzmann, and E. Teltsch, ``Global blackouts-Lessons learned," {\sl Power-Gen Europe}, 28(30), 2005.

\bibitem{jov05} D. Jovcic, and G. N. Pillai, ``Analytical modeling of TCSC dynamics," {\sl IEEE Transactions on Power Delivery} 20(2): 1097-1104, 2005.

\bibitem{jia16} J. Song, et~al, Dynamic modeling of cascading failure in power systems. {\sl IEEE Transactions on Power Systems}, 31(3): 2085-2095, 2016.

\bibitem{alm15} M. R. Almassalkhi, and I. A. Hiskens, ``Model-predictive cascade mitigation in electric power systems with storage and renewables-Part I: Theory and implementation," {\sl IEEE Transactions on Power Systems}, 30(1), 67-77, 2015.

\bibitem{yan15} J. Yan, Y. Tang, H. He, and Y. Sun, Cascading failure analysis with DC power flow model and transient stability analysis. {\sl IEEE Transactions on Power Systems}, 30(1), 285-297, 2015.

\bibitem{zhai17a} C. Zhai, H. Zhang, G. Xiao, and T. C. Pan, ``Comparing different models for investigating cascading failures in power systems," {\sl Proceedings of 2017 International Workshop on Complex Systems and Networks}, Doha, Qatar, 8-10 Dec. 2017.

\bibitem{stot09} B. Stott, J. Jardim, and O. Alsa\c{c}, ``DC power flow revisited," {\sl IEEE Transactions on Power Systems}, 24(3): 1290-1300, 2009.

\bibitem{tae16} K. Taedong, S. J. Wright, D. Bienstock, and S. Harnett, ``Analyzing vulnerability of power systems with continuous optimization formulations," {\sl IEEE Transactions on Network Science and Engineering} 3(3): 132-146, 2016.


\bibitem{stag68} G. W. Stagg, and A. H. El-Abiad., {\sl Computer Methods in Power System Analysis}, McGraw-Hill, 1968.

\bibitem{cz17} C. Zhai, H. Zhang, G. Xiao and T. Pan, ``Modeling and identification of worst cast cascading failures in power systems," arXiv preprint at https://arxiv.org/abs/1703.05232.

\bibitem{pas95} J. Paserba, et al, ``A thyristor controlled series compensation model for power system stability analysis," {\sl IEEE Transactions on Power Delivery}, 10(3): 1471-1478, 1995.

\bibitem{son00} K. M. Son, and Jong K. Park, ``On the robust LQG control of TCSC for damping power system oscillations," {\sl IEEE Transactions on Power Systems}, 15(4): 1306-1312, 2000.

\bibitem{kun94} P. Kundur, N. J. Balu, and M. G. Lauby. {\sl Power system stability and control}, Vol. 7, New York: McGraw-hill, 1994.



\bibitem{man94} O. L. Mangasarian, {\sl Nonlinear programming}, SIAM, 1994.

\bibitem{kno04} D. A. Knoll, and D. E. Keyes, ``Jacobian-free Newton-Krylov methods: a survey of approaches and applications," {\sl Journal of Computational Physics}, 193(2): 357-397, 2004.

\bibitem{saad86} Y. Saad, and H. S. Martin, ``GMRES: A generalized minimal residual algorithm for solving nonsymmetric linear systems," {\sl SIAM Journal on Scientific and Statistical Computing}, 7(3): 856-869, 1986.

\bibitem{saad90} P. N. Brown, and Y. Saad, ``Hybrid Krylov methods for nonlinear systems of equations," {\sl SIAM Journal on Scientific and Statistical Computing}, 11(3): 450-481, 1990.









\bibitem{zim11} R. D. Zimmerman, C. E. Murillo-S\'anchez, and R. J. Thomas, ``Matpower: Steady-state operations, planning and analysis tools for power systems research and education," {\sl IEEE Transactions on Power Systems}, vol. 26, no. 1, pp. 12-19, Feb. 2011.


\end{thebibliography}
\end{document}